\PassOptionsToPackage{unicode}{hyperref}
\PassOptionsToPackage{hyphens}{url}
\PassOptionsToPackage{dvipsnames,svgnames,x11names}{xcolor}

\documentclass[12pt]{article}
\usepackage{amsmath,amssymb,amsthm,mathtools,bm}
\usepackage{booktabs}
\usepackage{verbatim}
\usepackage{longtable}
\usepackage{makecell}
\usepackage{tabularx}
\usepackage{hyperref}
\usepackage{natbib}
\usepackage{graphicx}
\usepackage{subcaption}
\usepackage{float}
\usepackage{tikz}
\usepackage{pgfplots}
\pgfplotsset{compat=1.18}
\usetikzlibrary{arrows.meta,decorations.pathreplacing,positioning}
\usepackage{pdfpages}
\usepackage[ruled,vlined]{algorithm2e}

\hypersetup{
  pdftitle={A Leave-One-Out Influence Statistic for Density-Based Outlier Detection},
  pdfauthor={Aurélien Nicosia, Thierry Duchesne, Michel Carbon},
  pdfkeywords={outlier detection, influence diagnostics, leave-one-out, density estimation, binned estimators},
  colorlinks=true,
  linkcolor={blue},
  filecolor={Maroon},
  citecolor={Blue},
  urlcolor={Blue},
  pdfcreator={LaTeX}}

\theoremstyle{plain}

\newtheorem{proposition}{Proposition}

\theoremstyle{definition}

\newtheorem{assumption}{Assumption}

\newcommand{\R}{\mathbb{R}}

\newcommand{\blind}{0}

\begin{document}

\def\spacingset#1{\renewcommand{\baselinestretch}{#1}\small\normalsize}
\spacingset{1}

%%%%%%%%%%%%%%%%%%%%%%%%%%%%%%%%%%%%%%%%%%%%%%%%%%%%%%%%%%%%%%%%%%%%%%%%%%%%%%

\if0\blind
{
  \title{\bf A Leave-One-Out Influence Statistic for Density-Based Outlier Detection}
  \author{Aurélien Nicosia\\
    Université Laval\\
    \texttt{nicosia.aurelien@gmail.com}\\
    and \\
    Thierry Duchesne\\
    Université Laval\\
    and \\
    Michel Carbon\\
    Université Laval}
  \maketitle
} \fi

\if1\blind
{
  \bigskip
  \bigskip
  \bigskip
  \begin{center}
    {\LARGE\bf A Leave-One-Out Influence Statistic for Density-Based Outlier Detection}
\end{center}
  \medskip
} \fi
\date{}
\bigskip
\begin{abstract}
We propose a density-based leave-one-out influence score for unsupervised outlier detection. The motivation is that outliers are naturally associated with regions of very small probability density, but direct leave-one-out density refitting can be computationally prohibitive. We use the Linear-Blend Frequency Polygon (LBFP) estimator and define a score that compares the full-sample fitted density at an observation with the fitted density obtained after removing that observation, while keeping the grid and bandwidth fixed. The resulting statistic measures a relative density perturbation at the observation's own location. For the LBFP estimator, this score has an exact closed-form update, so the density estimator does not need to be refitted for each observation. This preserves a direct density interpretation while making the method computationally efficient for large samples. We study the score under contamination and show that regular positive-density observations and contamination-driven observations have distinct asymptotic orders. Simulations over a broad range of contamination models illustrate these theoretical regimes, show competitive performance relative to standard benchmarks, and document computing time. A credit-card fraud application with 29 variables illustrates that the method works well on a large real data set.
\end{abstract}

\noindent%
{\it Keywords\/}: bin width selection, contamination, linear blend frequency polygon, robustness
\vfill

\newpage
\spacingset{1.8} % JCGS spacing requirement
\graphicspath{{figures/}}

\section{Introduction}\label{sec:intro}

Outlier detection has long been a central problem in statistics and data science. Its goal is to identify observations that are unlikely under the mechanism generating the bulk of the data. This task is getting increasingly difficult in modern applications, where datasets may contain many observations, several variables and dependence between observations, which all create challenges for both statistical modeling and computation. These challenges are especially acute in outlier detection, because the regions of interest are precisely those where observations are rare.

Many unsupervised outlier detection methods have been proposed based on different notions of outlyingness. Density-based methods identify observations
that fall in regions of extremely low density, which is the most natural definition of an outlier. Other methods are based on properties of outlying observations, such as their distance to their nearest neighbors,
e.g., \(k\)NN \citep{Ramaswamy2000} or their isolation from data-dense regions, e.g., iForest \citep{liu2008isolationforest}. We believe that density-based methods have stronger theoretical foundations for outlier detection than distance-based methods, but in large samples or high-dimensional settings, their computational cost increases and their ability to detect outliers deteriorates substantially. Our goal in this paper is therefore to develop a density-based outlier identification method whose computational and inferential performance remains similar to that of distance or isolation-based approaches with large or
high-dimensional samples and that is robust to departures from independence among observations.

Density-based methods provide a natural framework for outlier detection because they score observations according to the amount of probability mass around them. In this view, an observation is unusual when it lies in a region where the fitted density is small, either in absolute terms or relative to nearby observations. Local density-contrast methods such as the Local Outlier Factor (LOF) compare the density around an observation with the density around its nearest neighbors \citep{Breunig2000}, while the Kernel Density Estimation Outlier Score (KDEOS) uses kernel-density estimates and evaluates observations relative to a neighborhood-based density baseline \citep{SchubertZimekKriegel2014}. A related strategy is to use a leave-one-out density score: if removing an observation substantially changes the fitted density at its own location, then that observation contributes unusually strongly to the fitted density at that point. Leave-one-out kernel-density methods use this principle for outlier scoring \citep{KandanaarachchiHyndman2021}.

Recently \citet{CarbonDuchesne2024LBFP,CarbonDuchesne2025} established the strong theoretical properties of the Linear-Blend Frequency Polygon (LBFP) multivariate estimator proposed by \citet{Scott1985}. In particular, they have shown its convergence under independent and dependent data, such as in strictly stationary alpha-mixing random fields. They have obtained closed-form expressions for the values of the smoothing parameters that yield good density estimation, even in low-density areas. However, using this estimator in a leave-one-out outlier detection approach would be computationally expensive. In this paper, we show that
the LBFP can indeed be used to define a leave-one-out density influence score. The proposed score compares the full-sample fitted density at an observation with the fitted density obtained after removing that observation, while keeping hyperparameters fixed. We develop a closed-form expression for the LBFP leave-one-out score so that the density estimator does not have to be refitted to compute the score for each observation. We also demonstrate its good theoretical properties under contamination and illustrate these properties through simulations over a broad range of contamination scenarios and application to a large real dataset on credit-card fraud. When compared with other density-based methods and with $k$NN and iForest, we conclude that the proposed method has a natural density-based interpretation with strong computational efficiency and a generally good ability to identify outlying observations.

The remainder of the paper is organized as follows. Section~\ref{sec:index} defines the leave-one-out density influence score, derives the closed-form expression for the leave-one-out LBFP update and studies the influence score's asymptotic behavior under contamination. Section~\ref{sec:simu} studies the computational and inferential performance of the proposed score under various contamination scenarios and sample sizes and compares the method to other benchmarks. The method is applied to credit card fraud detection in Section~\ref{sec:data}. Section~\ref{sec:discussion} discusses limitations and possible extensions.

\section{Leave-One-Out Density Influence Score}\label{sec:index}

\subsection{Definition and computation}\label{sec:index_definition}

Let $X_{{\bf i}}\in\mathbb{R}^d$, ${\bf i \in {\cal{I}_{\bf n}}}$, denote the $n=n_1\times\cdots\times n_N$ observations of a $d$-dimensional random vector on a rectangular lattice
${\cal{I}_{\bf n}}\subset \mathbb{Z}^N$. For example, we could have $N=2$, $i_1\in\{1,\dots,n_1\}$ and $i_2\in\{1,\dots,n_2\}$ if we observe
the random vector at $n_2$ regular time steps $t,2t,\dots,n_2t$ for each of $n_1$ individuals. 

Let $\widehat f_b$ denote
a fitted density estimate with $d$-dimensional smoothing parameter $b=(b_1,\ldots,b_d)$.
For each observation, let $\widehat f_{b,(-{\bf i})}$ denote the corresponding estimate computed
with the same value of $b$, but after deleting $X_{\bf i}$,
i.e., from the sample $X_{\bf j}$, ${\bf j}\in{\cal I}_{\bf n}\setminus\{{\bf i}\}$.
We define the leave-one-out density influence score (LOODIS) as
\begin{equation}
D_{\bf i}
=
1-
\frac{\widehat f_{b,(-{\bf i})}(X_{\bf i})}
{\widehat f_b(X_{\bf i})}
=
\frac{\widehat f_b(X_{\bf i})-\widehat f_{b,(-{\bf i})}(X_{\bf i})}
{\widehat f_b(X_{\bf i})}.
\label{eq:Di_def}
\end{equation}
Thus, $D_{\bf i}$ is a relative perturbation of the fitted density at the observation's own location, measuring the extent to which the fitted density at $X_{\bf i}$ depends on the presence of $X_{\bf i}$ itself. A large value of $D_{\bf i}$ therefore indicates strong local self-support under the fitted estimator rather than mere geometric isolation. If deleting $X_{\bf i}$ leaves the fitted density at $X_{\bf i}$ unchanged, then $D_{\bf i}=0$. The LOODIS is dimensionless because it is a ratio of density estimates. The term ``influence'' is used here in this finite-sample deletion sense, not in the infinitesimal influence-function sense of classical robustness theory \citep{Hampel1974}. 

The LOODIS defined in \eqref{eq:Di_def} can be used with different density estimators, but its usefulness depends on whether the leave-one-out density estimates can be computed efficiently and on the quality of the density estimator in data-sparse regions. For outlier detection purposes, where the mechanism that generates the outlying data points can be far from a standard well-behaved density, an estimator with very broad local convergence properties is desirable.
We therefore propose the use of the Linear-Blend Frequency Polygon (LBFP) estimator \citep{Scott1985}. 
The LBFP is a binned density estimator that linearly interpolates the multivariate histogram. Its convergence theory covers independent settings as well as weakly dependent settings, including strictly stationary alpha-mixing random fields \citep{CarbonDuchesne2024LBFP,CarbonDuchesne2025}.
Thanks to a computational property that we establish below, with the LBFP it is not necessary to compute the \(n\) leave-one-out estimators $\widehat f_{b,(-{\bf i})}$, ${\bf i}\in{\cal I}_{\bf n}$, separately; all of them can instead be obtained from a single computation of the estimator \(\widehat f_b\) based on the full sample.

To simplify notation, we take \(N=1\) in the remainder of the paper and write the observations as \(X_1,\ldots,X_n\). Proposition~\ref{prop:lbfp-closed-form} remains valid for \(N>1\) and general lattice indices: one simply replaces the scalar index \(i\) with the multi-index \({\bf i}\), because the leave-one-out update deletes a single observation while keeping the bin width and grid fixed. Consider $n$ observations of a $d$-dimensional random vector $X_1,\dots,X_n$. To compute the LBFP, we first define a grid of cells $I_k$ in $\R^d$ all of volume $V_b=b_1\times\cdots \times b_d$ 
with lower grid anchors
$x_k$, viz.
\[
I_k=\prod_{s=1}^d [x_{k,s},x_{k,s}+b_s).
\]
The grid limits are chosen coordinatewise with a half-bin margin around the sample, so that the smallest lower grid anchor satisfies \(\min_k x_{k,s}<\min_{1\le i\le n}X_{i,s}-b_s/2\) and the largest upper grid endpoint satisfies \(\max_k(x_{k,s}+b_s)>\max_{1\le i\le n}X_{i,s}+b_s/2\), for \(s=1,\ldots,d\).

Let $k(x)$ be the index of the cell containing $x$, and put
\[
u_s(x)=\frac{x_s-x_{k(x),s}}{b_s}\in[0,1),
\qquad s=1,\ldots,d.
\]
For each cell index $k$ and for $j=(j_1,\ldots,j_d)\in\{0,1\}^d$, we define $B_{k+j}$ as
\[
B_{k+j}
=\prod_{s=1}^d
\left[
x_{k,s}+\left(j_s-\frac12\right)b_s,\,
x_{k,s}+\left(j_s+\frac12\right)b_s
\right),
\]
which is cell $I_k$ shifted by half a cell width in each direction
corresponding to a nonzero entry in $j$.
Let $\nu_{k+j}$ denote the number of sample observations that fall in $B_{k+j}$.
Then the LBFP estimator is defined as
\begin{equation}
\widehat f_b(x)
=
\frac{1}{nV_b}
\sum_{j\in\{0,1\}^d}
c_j(x)\nu_{k(x)+j},
\label{eq:lbfp}
\end{equation}
where the weights $c_j(x)$ are given by
\[
c_j(x)
=
\prod_{s=1}^d
u_s(x)^{j_s}\{1-u_s(x)\}^{1-j_s}.
\]
We now establish the main computational result of this paper,
namely, a closed-form formula for the leave-one-out
LBFP calculation, which we prove in Appendix~\ref{app:closed_form}.
\begin{proposition}
\label{prop:lbfp-closed-form}
Let $u_{is}=u_s(X_i)$ and define $
W_i
=
\prod_{s=1}^d
\max\{u_{is},1-u_{is}\}.
$
Then
\begin{align*}
\widehat f_{b,(-i)}(X_i)
&=
\frac{n}{n-1}\widehat f_b(X_i)
-
\frac{W_i}{(n-1)V_b}\\
\Leftrightarrow D_i
&=
\frac{1}{n-1}
\left(
\frac{W_i}{V_b\widehat f_b(X_i)}
-1
\right).
\end{align*}
\end{proposition}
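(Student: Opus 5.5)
The plan is to compute $\widehat f_{b,(-i)}(X_i)$ directly from definition \eqref{eq:lbfp}, using the fact that the grid, the cell $I_{k(X_i)}$, the fractional coordinates $u_{is}$ and hence the weights $c_j(X_i)$ are all unchanged when $X_i$ is deleted, because $b$ and the grid are held fixed. Deleting $X_i$ therefore has only two effects. The normalising factor $nV_b$ becomes $(n-1)V_b$, and each count $\nu_{k(X_i)+j}$ is replaced by $\nu_{k(X_i)+j}-\mathbf{1}\{X_i\in B_{k(X_i)+j}\}$. The whole proof reduces to identifying which of the $2^d$ boxes $B_{k(X_i)+j}$, $j\in\{0,1\}^d$, contain $X_i$, and reading off the corresponding weight.

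The key step is this identification, done coordinatewise. Write $k=k(X_i)$, so that $X_{i,s}=x_{k,s}+u_{is}b_s$ with $u_{is}\in[0,1)$. The $s$-th factor of $B_{k+j}$ is $[x_{k,s}-b_s/2,\,x_{k,s}+b_s/2)$ when $j_s=0$, and $[x_{k,s}+b_s/2,\,x_{k,s}+3b_s/2)$ when $j_s=1$. Hence $X_{i,s}$ lies in the $j_s=0$ factor if and only if $u_{is}<1/2$, and in the $j_s=1$ factor if and only if $u_{is}\ge 1/2$. Consequently there is exactly one index $j^\ast\in\{0,1\}^d$ with $X_i\in B_{k+j^\ast}$, namely $j^\ast_s=\mathbf{1}\{u_{is}\ge 1/2\}$. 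Its weight is
\[
c_{j^\ast}(X_i)=\prod_{s=1}^d u_{is}^{j^\ast_s}(1-u_{is})^{1-j^\ast_s}=\prod_{s=1}^d \max\{u_{is},1-u_{is}\}=W_i .
\]
At a tie $u_{is}=1/2$ both factors are equal, so the half-open convention causes no ambiguity. The half-bin margin in the grid construction guarantees that these shifted boxes lie within the grid, so the counts are well defined.

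It then follows that
\[
\widehat f_{b,(-i)}(X_i)=\frac{1}{(n-1)V_b}\Bigl(\sum_{j}c_j(X_i)\nu_{k+j}-W_i\Bigr)=\frac{n}{n-1}\widehat f_b(X_i)-\frac{W_i}{(n-1)V_b}.
\]
For the equivalent form, first note that $\widehat f_b(X_i)\ge W_i/(nV_b)\ge 2^{-d}/(nV_b)>0$, since $\nu_{k+j^\ast}\ge 1$. So one may divide by $\widehat f_b(X_i)$ in \eqref{eq:Di_def}, which gives
\[
D_i=1-\frac{n}{n-1}+\frac{W_i}{(n-1)V_b\widehat f_b(X_i)}=\frac{1}{n-1}\Bigl(\frac{W_i}{V_b\widehat f_b(X_i)}-1\Bigr).
\]
The converse direction is the same algebra reversed.

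The main obstacle, though minor, is the careful boundary bookkeeping in the box-membership step. One must show that exactly one $B_{k+j}$ contains $X_i$ (not zero, not several), and that the relevant boxes are indexed relative to the cell of $X_i$ itself, so that deleting $X_i$ never affects any other count entering $\widehat f_b(X_i)$.
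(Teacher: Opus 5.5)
Your proof is correct and follows essentially the same route as the paper. The paper first writes the estimator as a sum of per-observation contributions, $\widehat f_b(X_i)=(W_i+S_i)/(nV_b)$ with $\widehat f_{b,(-i)}(X_i)=S_i/\{(n-1)V_b\}$, and then identifies $W_i$ as the weight of the unique counting cell $B_{k+j^\star}$, $j^\star_s=\mathbf 1\{u_{is}\ge 1/2\}$, which is exactly your count-decrement argument. Your observation that $\widehat f_b(X_i)\ge 2^{-d}/(nV_b)>0$, because $X_i$ is counted in its own cell, is a small improvement: the paper only treats the division as valid ``whenever $\widehat f_b(X_i)>0$''.
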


The weight $W_i$ in Proposition \ref{prop:lbfp-closed-form} is such that $2^{-d}\le W_i\le 1$. Writing $k_i=k(X_i)$, the weight $W_i$ can be viewed as a product of the relative positions of $X_i$ within the LBFP interpolation cell containing $X_i$ and its neighboring counting cells $B_{k_i+j}$, $j\in\{0,1\}^d$.
As a consequence, the LOODIS $D_i$ is bounded above by 1 and can take on slightly negative values
for observations in areas of very high density. In addition to the closed-form update in Proposition~\ref{prop:lbfp-closed-form}, the implementation uses a sparse representation of the LBFP counting grid, so that empty grid cells are neither stored nor scanned explicitly; this computational strategy is described in Supplementary Section~S3.

Now that we have demonstrated that the LBFP has the required computational property of
not requiring $n$ leave-one-out estimates, we must show that the LOODIS based on the
LBFP, which we henceforth refer to as LOO-LBFP, has the ability to identify outliers across several contamination scenarios.

\subsection{Ability of LOO-LBFP to identify outliers}\label{sec:discrimination}

We now study how the LOO-LBFP behaves under a classical contamination setting
\citep{Huber1964}. Suppose that $X_1,\dots,X_n$ are observations from a distribution with mixture density
\begin{equation}   
f_n(x)
=
(1-\varepsilon_n)f_0(x)
+
\varepsilon_n g_n(x),\label{eq:contaminmodele}
\end{equation}
where \(f_0\) denotes the inlier density, \(g_n\) denotes the contaminating (outlier) density,
and \(\varepsilon_n\) is the contamination proportion. The density estimator \(\widehat f_b\) 
computed from the contaminated sample now estimates the mixture density \(f_n\).
We let \(C_i\in\{0,1\}\) be the contamination
indicator, where \(C_i=0\) if $X_i$ is an inlier observation generated from
\(f_0\), and \(C_i=1\) if $X_i$ is an observation arising from the contamination density \(g_n\).

The next proposition establishes the asymptotic behavior of $D_i$ under the contamination model (\ref{eq:contaminmodele}). 
The key idea is that the magnitude of \(D_i\) depends on how much the fitted density at \(X_i\) is supported by neighboring observations 
from the contaminating component. Weak assumptions on the sampling of the observations and on
the contamination model (\ref{eq:contaminmodele}) are required.

\begin{assumption}
\label{ass:contamination-regions}
\itshape

There exists \(\delta>0\) such that
\[
P\{f_n(X_i)\ge \delta\mid C_i=0\}\to1
\]
as \(n\to\infty\).

For contaminating observations, the mixture density is contamination-dominated in the sense that, for every \(\eta>0\),
\[
P\Bigg(
\left|
\frac{f_n(X_i)}{\varepsilon_n g_n(X_i)}-1
\right|>\eta
\;\Bigg|\;
C_i=1
\Bigg)
\to0
\]
as \(n\to\infty\). In addition,
\[
V_b\varepsilon_n g_n(X_i)=O_p(1)
\]
conditionally on \(C_i=1\).
\end{assumption}

Assumption~\ref{ass:contamination-regions} separates the two regimes used in Proposition~\ref{prop:contam_orders}. For inlier observations, the mixture density at \(X_i\) is bounded away from zero with probability tending to one, which prevents the denominator in the closed-form expression for \(D_i\) from vanishing. For contaminating observations, the displayed probability statement is the definition of the conditional convergence in probability of \(f_n(X_i)/\{\varepsilon_n g_n(X_i)\}\) to one. It says that the mixture density at \(X_i\) is asymptotically governed by the contamination density. The additional bound \(V_b\varepsilon_n g_n(X_i)=O_p(1)\) controls the contamination scale relative to the LBFP cell volume.

\begin{assumption}
\label{ass:contamination-LBFP}
\itshape
For inlier observations, the fitted density is consistent for the mixture density in the sense that, for every \(\eta>0\),
\[
P\Bigg(
\bigg|
\frac{\widehat f_b(X_i)}{f_n(X_i)}-1
\bigg|>\eta
\;\Bigg|\;
C_i=0
\Bigg)
\to0
\]
as \(n\to\infty\).

For contaminating observations,
\[
P\{\widehat f_b(X_i)>0\mid C_i=1\}\to1
\]
as \(n\to\infty\), and
\[
\frac{\varepsilon_n g_n(X_i)}{\widehat f_b(X_i)}=O_p(1)
\]
conditionally on \(C_i=1\).

\end{assumption}

Assumption~\ref{ass:contamination-LBFP} gives sufficient conditions on the asymptotic behavior of $\widehat f_b(X_i)$. For inlier observations, the displayed probability statement is the definition of the conditional convergence in probability of \(\widehat f_b(X_i)/f_n(X_i)\) to one. For contaminating observations, we impose a weaker condition: the fitted density is positive with probability tending to one and is not asymptotically smaller than the contamination scale \(\varepsilon_n g_n(X_i)\). Together, Assumptions~\ref{ass:contamination-regions} and~\ref{ass:contamination-LBFP} give the density bounds needed to control the denominator in the closed-form expression for \(D_i\).

\begin{proposition}
\label{prop:contam_orders}
Assume that \(d\) is fixed and that \(V_b\to0\) and \(nV_b\to\infty\) as \(n\to\infty\). Under Assumptions~\ref{ass:contamination-regions} and~\ref{ass:contamination-LBFP},
\begin{align*}
    \text{if } C_i &= 0,\qquad D_i=O_p\{(nV_b)^{-1}\}, \\
    \text{if } C_i &= 1,\qquad D_i\, nV_b\, \varepsilon_n g_n(X_i)=O_p(1).
\end{align*}
\end{proposition}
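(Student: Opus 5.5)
The plan is to work directly from the closed form of Proposition~\ref{prop:lbfp-closed-form},
\[
D_i=\frac{1}{n-1}\left(\frac{W_i}{V_b\widehat f_b(X_i)}-1\right),
\]
together with the deterministic bound $2^{-d}\le W_i\le 1$. Since $d$ is fixed, this bound lets us treat $W_i$ as a bounded nuisance factor, and the whole question becomes the size of $V_b\widehat f_b(X_i)$ in each regime. We also write $n/(n-1)=1+o(1)$ throughout, so the factor $1/(n-1)$ can be replaced by $1/n$ up to constants.

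\emph{Inlier case ($C_i=0$).}
\begin{enumerate}
\item By Assumption~\ref{ass:contamination-regions}, $f_n(X_i)\ge\delta$ with conditional probability tending to one.
\item By Assumption~\ref{ass:contamination-LBFP}, $\widehat f_b(X_i)/f_n(X_i)\to1$ in conditional probability. Combining the two, $\widehat f_b(X_i)\ge\delta/2$ with probability tending to one, so $1/\widehat f_b(X_i)=O_p(1)$.
\item Then $W_i/\{V_b\widehat f_b(X_i)\}\le 1/\{V_b\widehat f_b(X_i)\}=O_p(V_b^{-1})$.
\item Because $V_b\to0$, the constant $-1$ is dominated by $V_b^{-1}$, so $|D_i|\le (n-1)^{-1}\{O_p(V_b^{-1})+1\}=O_p\{(nV_b)^{-1}\}$.
\end{enumerate}
Step 4 uses only $V_b\to0$; it needs no comparison between $f_n(X_i)$ and $V_b$.

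\emph{Contaminated case ($C_i=1$).}
\begin{enumerate}
\item Multiply the closed form by $nV_b\,\varepsilon_n g_n(X_i)$:
\[
D_i\,nV_b\,\varepsilon_n g_n(X_i)=\frac{n}{n-1}\left(W_i\,\frac{\varepsilon_n g_n(X_i)}{\widehat f_b(X_i)}-V_b\,\varepsilon_n g_n(X_i)\right).
\]
This is well defined on the event $\{\widehat f_b(X_i)>0\}$, which has conditional probability tending to one by Assumption~\ref{ass:contamination-LBFP}.
\item The first term is bounded by $\varepsilon_n g_n(X_i)/\widehat f_b(X_i)$, which is $O_p(1)$ conditionally on $C_i=1$ by Assumption~\ref{ass:contamination-LBFP}.
\item The second term is $V_b\varepsilon_n g_n(X_i)=O_p(1)$ by the last part of Assumption~\ref{ass:contamination-regions}.
\item A sum of two $O_p(1)$ terms is $O_p(1)$, and multiplying by $n/(n-1)\to1$ preserves this.
\end{enumerate}
The contamination-dominance part of Assumption~\ref{ass:contamination-regions} is not needed for the bound itself. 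It only supports reading the result as saying that $D_i$ is of order $\{nV_b\,\varepsilon_n g_n(X_i)\}^{-1}$, which is much larger than $(nV_b)^{-1}$ when $\varepsilon_n g_n(X_i)$ is small.

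\emph{Main care point.} The substance is simple; the care lies in handling the conditioning correctly. Every $O_p$ statement is conditional on $C_i$, and the exceptional events where $\widehat f_b(X_i)$ is small or zero have vanishing conditional probability. To make this rigorous, fix $\eta>0$, choose $M$ so that each of the finitely many $O_p$ bounds fails with conditional probability at most $\eta$, and combine them by a union bound.
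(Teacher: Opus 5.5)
Your proposal is correct and follows essentially the same argument as the paper. Both start from the closed form of Proposition~\ref{prop:lbfp-closed-form} with $2^{-d}\le W_i\le 1$, and then proceed by case:
\begin{itemize}
\item for inliers, you lower-bound $\widehat f_b(X_i)$ by $\delta/2$ on an event of conditional probability tending to one, and absorb the $1/(n-1)$ term using $V_b\to0$;
\item for contaminated observations, you multiply by $nV_b\varepsilon_n g_n(X_i)$ and bound the two resulting terms by the $O_p(1)$ conditions in Assumptions~\ref{ass:contamination-LBFP} and~\ref{ass:contamination-regions}, respectively. Your remark that the contamination-dominance condition is not used matches the paper's proof.
\end{itemize}
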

In other words, regular inlier observations have order \(O_p\{(nV_b)^{-1}\}\). For contaminating observations, the product \(D_i nV_b\varepsilon_n g_n(X_i)\) remains bounded in probability. 
%In most applications, this means that the LOO-LBFP of inlier observations will have a distribution located close to zero and that outlier observations may have a distribution
%that will tend to be distributed above zero.
The proposition also explicitly shows that the impact of deleting an observation on $\widehat f_b$ decreases as the local contaminating support around this observation increases.
The proof of the proposition is given in Appendix~\ref{app:contam_proof}.

Hence, while the asymptotic behavior of $D_i$ is clear for inlier observations,
its asymptotic behavior for outlying observations depends on how
the smoothing parameters $b_1,\dots,b_d$, the contamination
proportion $\varepsilon_n$ and the contamination density $g_n$ behave
as $n\to\infty$.
To understand this behavior, 
consider the expected number of contaminating observations around $x$,
\[
\lambda_n(x)
=
n\varepsilon_n
P\{Z\in A_b(x)\},
\]
where 
$A_b(x)
=
\bigcup_{j\in\{0,1\}^d}
B_{k(x)+j}$
where $Z$ is a random variable with density $g_n$. From
the mean value theorem, $\lambda_n(x)
\approx
nV_b\varepsilon_n g_n(x)$, which is the local-count version of the scale appearing
for contaminating observations in Proposition \ref{prop:contam_orders}.

%First,  
%Suppose $C_i=0$ and $X_i=x$, then $\lambda_n(x)$ has three asymptotic behaviors. When this observation is isolated in $A_b(x)$, then the fitted density near $x$ depends %strongly on the observation itself, thus 
%if $\lambda_n(x)\to0$ as $n\to\infty$, then a contaminating observation has little mutual support from other contaminating observations in its LBFP neighborhood. 
%This is a favorable regime for separating such observations from regular inliers.
If \(\lambda_n(x)\to0\), then a contaminating observation has few other contaminating observations in its LBFP neighborhood. In this case, deleting this observation can remove a large fraction of the fitted density at its location and \(D_i\) can be large. This is the case where outlier identification should be easiest.

Second, if $\lambda_n(x)\to\lambda\in(0,\infty)$, $D_i=O_p(1)$. Because the distribution of $D_i$ goes to zero at rate $(nV_b)^{-1}$ for inlier observations, the distribution of $D_i$ for outlier observations should still be non-degenerate, and
the finite local-support regime can still produce separation between contaminating and inlier observations.

%Finally, if $\lambda_n(x)\to\infty$, $D_i$ will decrease to zero for contaminating observations, but at a slower rate than of inliers when the contamination density remains the main contribution to $f_n(X_i)$ 
Finally, if \(\lambda_n(x)\to\infty\), \(D_i\) also decreases for contaminating observations. However, this decrease is at a slower rate than the inlier order \(O_p\{(nV_b)^{-1}\}\) when \(\varepsilon_n g_n(X_i)=o(1)\) remains the dominant contribution to \(f_n(X_i)\) on the outlier region. These interpretations are illustrated through a simulation study in Section~\ref{sec:simu}.

\subsection{Choice of $b$}

The rate of convergence of $\widehat f_b$ to its target density depends on the choice of $b$, and so does the behavior of $D_i$.
The asymptotic calculations of \citet{CarbonDuchesne2025} motivate bin widths of order \(n^{-1/(d+4)}\), with coordinate-specific scale constants. These scale constants involve the sample standard deviations
$$
\hat{\sigma}_s=\sqrt{\frac{1}{n-1}\sum_{i=1}^n(X_{si}-\bar{X}_s)^2},\ s=1,\dots,d.$$
Because these $\hat{\sigma}_s$ can be sensitive to outlying observations and lead to overly large $b_s$, we use the robustified version
$$
\widehat\sigma^{\mathrm{rob}}_s
=
\min\left\{
\widehat\sigma_s,\,
\frac{\mathrm{IQR}_s}{1.349}
\right\},
$$
where $\mathrm{IQR}_s/1.349$ is the sample interquartile range of $X_s$ divided by the interquartile range of the standard normal distribution. In particular, the coordinate bin width is taken proportional to \(\widehat\sigma^{\mathrm{rob}}_s n^{-1/(d+4)}\). The resulting bin-width rule is used in all simulations and data applications. The corresponding method configurations and reproducibility scripts are documented in the Supplementary Materials, Sections~S1 and S2.

\section{Simulations} \label{sec:simu}

\subsection{Behavior of $D_i$}\label{sec:sim_lambda}

We run a first set of simulations to illustrate the distributional behavior of $D_i$ given by Proposition~\ref{prop:contam_orders} and the impact of the value
of the expected number of neighbors $\lambda$ on the distribution of $D_i$. 
We run a full factorial set of 24 simulations, with three values for sample size ($n\in\{1000,10000,100000\}$), two densities $f_0$ and four values
of $\lambda$ ($\lambda \in \{0.25, 1, 4, 16\}$). The contamination fraction is set to $\varepsilon_n=5\%$ and $d=2$ in all 24 simulations.

Two inlier densities $f_0$ are used. The first is the standard centered isotropic bivariate normal,
$N_2\{(0,0)^\top,I_2\}$, where $I_2$ is the $2\times2$ identity matrix. The second is the four-state Markov-switching Gaussian mixture from the simulation study in \citet{CarbonDuchesne2025}. 

In all simulations, outliers are drawn from a bivariate normal distribution
$N_2\{(6,6)^\top,\tau_n^2 I_2\}$,
with $\tau_n$ set so as to obtain the target $\lambda$. 
The details of this calibration and the realized empirical values of \(\widehat\lambda\) are reported in the Supplementary Materials, Section~S1.

Figure~\ref{fig:lambda_regimes} displays 24 pairs of boxplots of the $D_i$ values, each pair consisting of one boxplot for inlier observations (grey) and one
boxplot for contaminating observations (red). The simulations show how the distribution of \(D_i\) changes with the expected number of nearby contaminating observations. The \(D_i\) values for inliers decrease toward the order \(O_p\{(nV_b)^{-1}\}\) as \(n\) increases. The values of \(D_i\) for contaminating observations are close to one when \(\lambda\) is small and move away from one as \(\lambda\) increases.

\begin{figure}[H]
    \centering
    \includegraphics[width=0.95\textwidth]{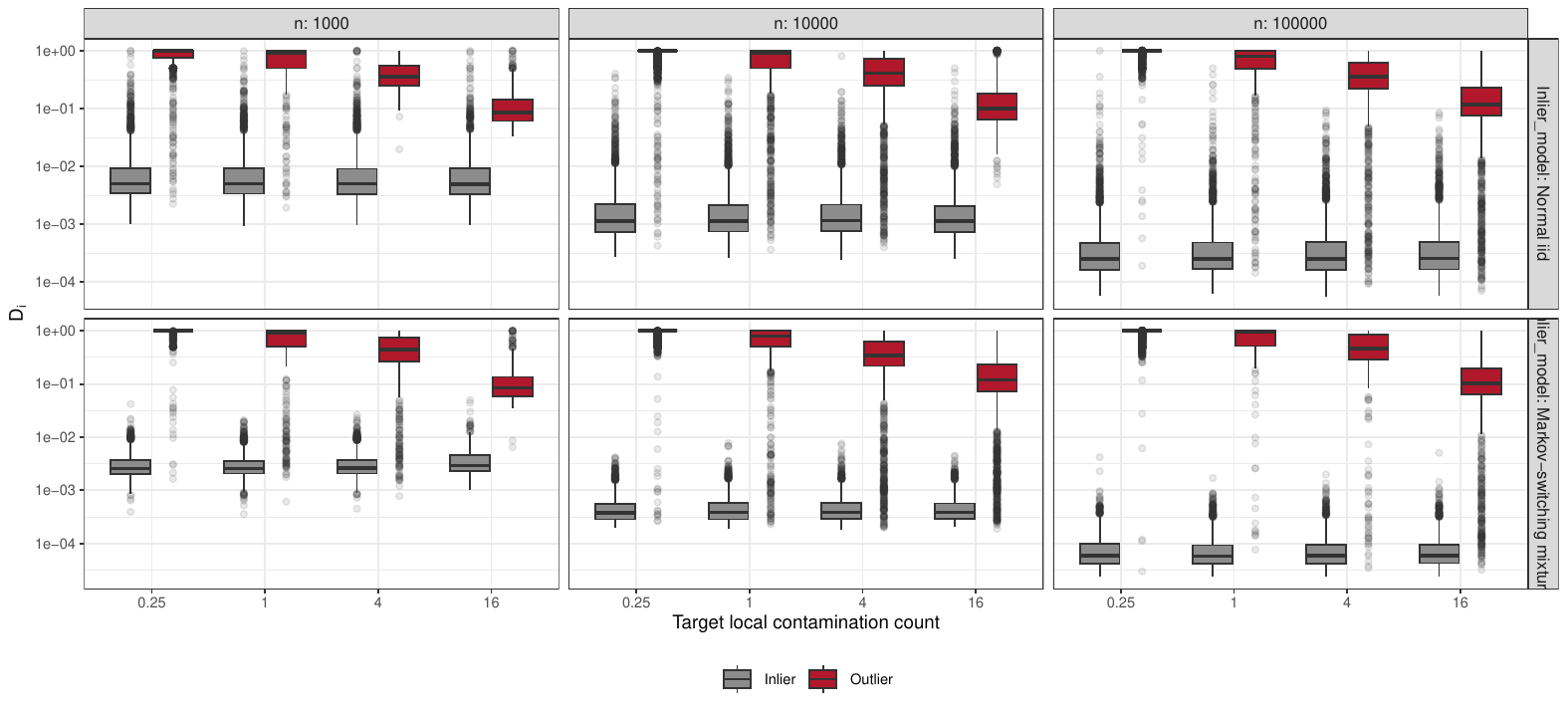}
    \caption{Distribution of $D_i$ by class, sample size, inlier model, and $\lambda$. Rows correspond to data-generating models and columns to sample sizes. Small $\lambda$ corresponds to weak local support among outliers; large $\lambda$ corresponds to stronger mutual support. The y-axis is on a logarithmic scale.}
    \label{fig:lambda_regimes}
\end{figure}

\subsection{Robustness to contamination}\label{sec:sim_comparison}

We next compare LOO-LBFP with standard baselines under four contamination scenarios. The goal is not to show dominance of the LOO-LBFP, but to assess how this score behaves across distinct anomaly mechanisms. We consider two types of outlier detection methods, namely density-based outlier detection criteria and non-density-based
criteria. The density-based criteria are LOO-KDE \citep{KandanaarachchiHyndman2021}, LOF \citep{Breunig2000}, and KDEOS \citep{SchubertZimekKriegel2014}; the non-density-based criteria are $k$NN distance \citep{Ramaswamy2000} and iForest \citep{liu2008isolationforest}.
We simulated 100 samples from each of the four models. For each simulated sample, $n=1000$ and the contamination fraction $\varepsilon_n=5\%$.
\begin{figure}[H]
    \centering
    \includegraphics[width=0.88\textwidth]{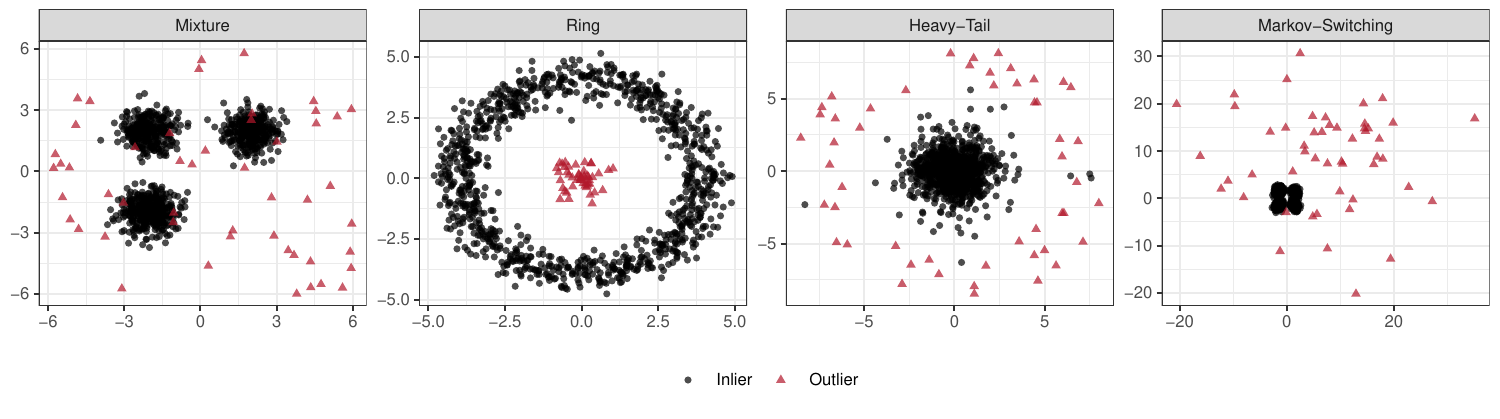}
\caption{Representative realizations of the four simulated scenarios: (a) Mixture, three Gaussian clusters with uniform global contamination; (b) Ring, noisy ring with central Gaussian contamination; (c) Heavy-Tail, Gaussian and Student-$t_3$ inliers with radial extremes; and (d) Markov-Switching, a dependent four-state Gaussian mixture with Gaussian contamination \(N_2\{(6,6)^\top,10^2I_2\}\). Black circles are inliers and red triangles are outliers.}
    \label{fig:sim_scenarios}
\end{figure}

Figure~\ref{fig:sim_scenarios} shows a representative realization from each of the four simulation models. In all scenarios, the data are generated from the contamination model
\[
f_n(x)=0.95f_0(x)+0.05 g_n(x),
\]
where \(f_0\) is the inlier distribution and \(g_n\) is the contaminating distribution. The four choices of \(f_0\) and \(g_n\) were selected to represent distinct anomaly mechanisms.
The details of the data-generating mechanisms, software implementations, seeds, and hyperparameters for all methods are documented in the Supplementary Materials, Section~S1.

In the Mixture scenario, \(f_0\) is a three-component Gaussian mixture with centers \((-2,-2)^\top\), \((2,2)^\top\), and \((-2,2)^\top\), corresponding mixture weights \(0.4\), \(0.3\), and \(0.3\), and common covariance matrix \(0.3I_2\). The contaminating density \(g_n\) is uniform on \([-6,6]^2\). This scenario represents diffuse global contamination around a multimodal inlier distribution.

In the Ring scenario, \(f_0\) is the distribution of $
X=(R\cos\Theta,R\sin\Theta)^\top$, where $ \Theta\sim U(0,2\pi)$, $
R\sim N(4,0.4^2)$
and \(g_n=N_2\big((0,0)^\top,0.2I_2\big)\). This scenario should be less favorable to scores purely based on observation deletion.

In the Heavy-Tail scenario, $f_0$ is a $70/30$ mixture of a bivariate normal distribution $N_2\big((0,0)^\top,I_2\big)$ and a bivariate Student-$t$ distribution with 3 degrees of freedom and independent components. For the contaminating density, $
X=(R\cos\Theta,R\sin\Theta)^\top,
$ where $
\Theta\sim U(0,2\pi)$ and $
R\sim U(6,9)$.
This setting tests whether the methods can identify low-support extremes when the inlier distribution itself has heavier tails than a Gaussian model.

In the Markov-Switching scenario, \(f_0\) is a dependent four-state Markov-switching Gaussian mixture from \citet{CarbonDuchesne2024LBFP}, with \(g_n=N_2\{(6,6)^\top,10^2I_2\}\). This scenario compares the methods when the data are not independent.

For each scenario, we simulate $100$ samples of size $n=1000$ and we compute the outlier score for the six methods for every observation. To illustrate Proposition~\ref{prop:contam_orders}, we compare the across-replication means and standard deviations of the raw score $D_i$ separately for inlier and outlier observations, for each scenario. Table~\ref{tab:D_summary} reports these results. As expected, the LOO-LBFP values are well separated between outlier and inlier observations. This is true for every scenario, albeit less pronounced in the Ring scenario, where the outliers are more densely concentrated.

\begin{table}[H]
    \centering
    \caption{LOO-LBFP score summaries by scenario. The table reports across-replication means and standard deviations of the within-class average LOO-LBFP score $D_i$.}
    \begin{tabular}{lcc}
\toprule
Scenario & $D_{i}(C_i=0)$ & $D_i(C_i=1)$ \\
\midrule
Mixture & 0.0039 (0.0002) & 0.1342 (0.0220) \\
Ring & 0.0057 (0.0001) & 0.0076 (0.0007) \\
Heavy-Tail & 0.0251 (0.0029) & 0.6406 (0.0471) \\
Markov-Switching & 0.0032 (0.0001) & 0.7294 (0.0417) \\
\bottomrule
\end{tabular}
\label{tab:D_summary}

\end{table}

For each sample and method, we computed the area under the receiver operating characteristic curve (ROC-AUC) with outlier indicator as the target and outlier score as the predictor. Table~\ref{tab:sim_auc} reports the mean (standard deviation) of the 100 ROC-AUC values for each scenario and for each method. Values of ROC-AUC close to 1 correspond to near-perfect discrimination between inlier and outlier observations. Although LOO-LBFP is never the best-performing method, it consistently performs close to the best. Therefore, LOO-LBFP is a robust criterion for outlier detection. We will now compare its computational cost to that of the other methods.

\begin{table}[H] 
    \centering
    \caption{ROC-AUC (Mean (SD) over 100 replications). We compare the proposed LOO-LBFP with the density-based methods LOO-KDE, LOF, and KDEOS, and with the non-density-based methods $k$NN and iForest, across four scenarios.}
    \begin{tabular}{lcccc}
\toprule
Method & Mixture & Ring & Heavy-Tail & Markov-Switching \\
\midrule
LOO-LBFP & 0.904 (0.035) & 0.718 (0.066) & 0.990 (0.003) & 0.986 (0.012) \\
LOO-KDE & 0.960 (0.020) & 0.862 (0.043) & 0.988 (0.003) & 0.996 (0.006) \\
LOF & 0.956 (0.021) & 0.558 (0.030) & 0.973 (0.016) & 0.952 (0.036) \\
KDEOS & 0.942 (0.026) & 0.573 (0.052) & 0.945 (0.045) & 0.944 (0.050) \\
\hline
kNN & 0.960 (0.019) & 0.744 (0.093) & 0.997 (0.002) & 0.996 (0.006) \\
iForest & 0.957 (0.021) & 0.655 (0.079) & 0.997 (0.002) & 0.995 (0.006) \\
\bottomrule
\end{tabular}
\label{tab:sim_auc}

\end{table}

\subsection{Computational cost}\label{sec:sim_cost}

We computed the execution times of each method for every simulated sample. We report the mean (standard deviation) of the 100 execution times (in seconds) in Table~\ref{tab:sim_runtime}. We can see that LOO-LBFP is always among the fastest methods. This stems from Proposition~\ref{prop:lbfp-closed-form}, which eliminates the need for repeated leave-one-out refitting.

\begin{table}[H]
    \centering
    \caption{Execution times in seconds (Mean (SD) over 100 replications). We compare the proposed LOO-LBFP against LOO-KDE, LOF, KDEOS, $k$NN, and iForest across four scenarios.}
    \begin{tabular}{lcccc}
\toprule
Method & Mixture & Ring & Heavy-Tail & Markov-Switching \\
\midrule
LOO-LBFP & 0.007 (0.001) & 0.006 (0.001) & 0.007 (0.001) & 0.007 (0.001) \\
LOO-KDE & 1.229 (0.272) & 1.167 (0.084) & 1.173 (0.159) & 1.149 (0.092) \\
LOF & 0.005 (0.003) & 0.005 (0.001) & 0.005 (0.001) & 0.005 (0.000) \\
KDEOS & 0.114 (0.026) & 0.110 (0.004) & 0.114 (0.013) & 0.111 (0.008) \\
\hline
kNN & 0.003 (0.001) & 0.003 (0.000) & 0.003 (0.000) & 0.003 (0.001) \\
iForest & 0.010 (0.005) & 0.010 (0.000) & 0.009 (0.002) & 0.009 (0.005) \\
\bottomrule
\end{tabular}
\label{tab:sim_runtime}

\end{table}

To see how the computational cost increases with $n$, we simulated samples of varying sizes ($n\in \{500, 1000, 2000, 5000, 10000, 50000, 100000\}$) from the $N_2\big((0,0)^\top,I_2\big)$ distribution. No contamination is added in this experiment because the goal is to see how the computational time scales with sample size.

We simulated one sample for each value of $n$, as the computational time did not vary significantly from sample to sample. Figure~\ref{fig:runtime_by_n} summarizes the results. LOO-KDE is by far the method that is most sensitive to the sample size; we had to stop its execution after 5 minutes for $n > 10000$. As the sample size increases, LOO-LBFP seems to be the most computationally efficient method among those considered.
%The reasons underlying this behavior are discussed in Appendix~\ref{app:sim-computation}.

\begin{figure}[H]
    \centering
    \includegraphics[width=0.72\textwidth]{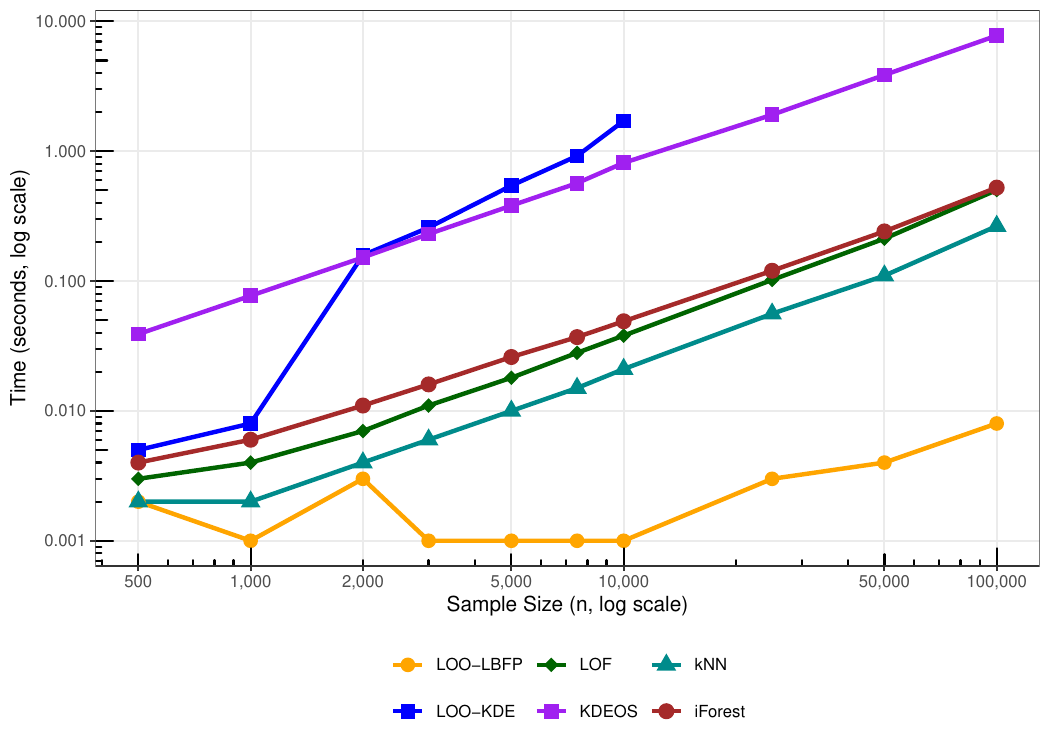}
    \caption{Execution time as a function of sample size. Curves are method-specific, and both axes are on a logarithmic scale.}
    \label{fig:runtime_by_n}
\end{figure}

\section{Credit Card Fraud Application}\label{sec:data}

We now consider the Credit Card Fraud Detection dataset \citep{dalpozzolo2015calibrating}, publicly available through OpenML (dataset ID 42175; \citealp{Vanschoren2013OpenML,creditcard2013}). The dataset contains $n=284{,}807$ transactions, of which 492 are fraudulent (for a fraud rate of $0.172\%$). Available variables are the 28 anonymized PCA variables V1--V28, Time, and Amount. Outlier detection is frequently
used to identify potential fraud \citep{BoltonHand2002,Chandola2009}. We therefore apply the methods compared in Section~\ref{sec:simu} to these data to see if the transactions with high outlier scores tend to correspond to fraudulent transactions. The scores are computed from the 28 PCA variables V1--V28 and Amount, all standardized before fitting; the variable Time is excluded because it records elapsed time within the data collection window rather than a transaction-level feature. It will be interesting to see if these methods work in this 29-dimensional setting.

For each method, we compute ROC-AUC with Fraud as the outcome and the outlier score as the predictor, using the full sample. We also compute the lift at the empirical fraud rate: it is the number of actual fraud cases among the 492 transactions with the highest scores divided by the fraud rate $492/284{,}807$. The higher the lift, the better the method is, and random sorting of the transactions would yield an expected lift of 1. We also record the computing time for each method. We do not have ROC-AUC or lift for LOO-KDE because it still had not finished its execution after 30 minutes.

\begin{table}[H]
    \centering
    \caption{Performance of five methods to detect fraudulent transactions on the full sample credit card dataset. Lift@0.172\% is precision among the top 492 ranked transactions divided by the overall fraud rate.}
    \label{tab:fraud_full_methods}
    \begin{tabular}{@{}lccc@{}}
\toprule
Method & ROC-AUC & Lift@0.172\% & Time (s) \\
\midrule
LOO-LBFP & 0.901 & 21.18 & 91.43 \\
LOO-KDE & NA & NA & $>1800$\\
LOF & 0.510 & 0.00 & 1246.27 \\
KDEOS & 0.879 & 0.00 & 615.17 \\
\hline
kNN & 0.956 & 65.89 & 585.00 \\
iForest & 0.951 & 134.13 & 1.50 \\
\bottomrule
\end{tabular}

\end{table}

Even though the methods are not trained to detect fraud, several of them assign high ranks to fraudulent transactions. The $k$NN and iForest scores achieve the largest lift values. LOO-LBFP has lower, but still quite respectable, lift and is substantially faster than $k$NN in this full-sample run. In this example, LOO-LBFP is the only density-based
criterion considered that is both computationally efficient and able to identify outlying observations, with a detection performance that does not lag too much behind non-density-based approaches.

\section{Conclusion}\label{sec:discussion}

Our objective in this paper was to propose a versatile and robust density-based outlier detection method that performs well in a vast array of contamination scenarios. We also wanted the
method to be computationally efficient so that it can be applied on large samples with several variables in reasonable time. We showed that the LOO-LBFP possesses all of these properties.
We first obtained a closed-form expression to compute this score without the need for repeated leave-one-out refitting. We then formally proved that the distribution of the LOO-LBFP
score was well separated between outlier and inlier observations under a general contamination model.

The results of our numerical investigations aligned with our theoretical findings. The distributions of the LOO-LBFP scores were well separated for inlier and outlier observations
in many contamination scenarios. When compared to other outlier detection methods, LOO-LBFP performed generally well with good computational speed. These conclusions also held true
when the methods were applied to a real dataset on credit card fraud with 29 variables and 284{,}807 observations.

Yet, the LOO-LBFP can probably still be improved. Because it so heavily relies on the quality of the density estimation around a point $X_i$, using
a smoothing parameter $b$ that varies with $x$ would likely lead to improved performance. While the computational cost of LOO-LBFP grows relatively slowly in $n$, it is
highly sensitive to the dimension $d$, as the LBFP is a weighted average of $2^d$ cell counts. If the data are sparse, as is often the case when one wants to identify outliers and was the case in the simulations and the credit card
fraud application, LOO-LBFP can be computed quickly. But with dense data in high dimensions, the method would break down. Other density-based methods do not
break down computationally when $d$ is large, but they perform poorly statistically. Another open question is how to use LOO-LBFP to formally assign an
inlier/outlier label to each observation. If the proportion of outliers, say $p$, in the sample is known, then labeling the $pn$ observations with the highest
LOO-LBFP score as outliers would be the natural approach, but developing a strategy to set this threshold when $p$ is unknown would be valuable.

\section*{Disclosure Statement}\label{disclosure-statement}

The authors have no conflicts of interest to declare.

\appendix
\section{Proof of Proposition~\ref{prop:lbfp-closed-form}}\label{app:closed_form}

\begin{proof}[Proof of Proposition~\ref{prop:lbfp-closed-form}]
Throughout the proof, the bin-width vector $b$, the grid anchor $x$, and the counting cells are held fixed. Thus the leave-one-out operation removes only the contribution of $X_i$; it does not reselect the bandwidth or rebuild the grid.

We first derive an algebraic identity for any estimator that can be written as a normalized sum of individual contributions,
\[
\widehat f_b(x)
=
\frac{1}{nV_b}
\sum_{r=1}^n w_r(x),
\]
where $w_r(x)\ge0$. After deleting $X_i$,
\[
\widehat f_{b,(-i)}(x)
=
\frac{1}{(n-1)V_b}
\sum_{r\ne i} w_r(x).
\]
At the evaluation point $x=X_i$, define
\[
W_i=w_i(X_i),
\qquad
S_i=\sum_{r\ne i}w_r(X_i).
\]
Then
\[
\widehat f_b(X_i)
=
\frac{W_i+S_i}{nV_b},
\qquad
\widehat f_{b,(-i)}(X_i)
=
\frac{S_i}{(n-1)V_b}.
\]
Solving the first identity for $S_i$ gives
\[
S_i=nV_b\widehat f_b(X_i)-W_i.
\]
Substitution into the leave-one-out expression yields
\[
\widehat f_{b,(-i)}(X_i)
=
\frac{n}{n-1}\widehat f_b(X_i)
-
\frac{W_i}{(n-1)V_b}.
\]
Therefore
\[
D_i
=
1-
\frac{\widehat f_{b,(-i)}(X_i)}
{\widehat f_b(X_i)}
=
\frac{1}{n-1}
\left\{
\frac{W_i}{V_b\widehat f_b(X_i)}
-1
\right\}.
\]
Because $W_i\ge0$ and $S_i\ge0$, the same decomposition implies
\[
0
\le
\widehat f_{b,(-i)}(X_i)
\le
\frac{n}{n-1}\widehat f_b(X_i).
\]
Hence, whenever $\widehat f_b(X_i)>0$, the fitted densities are nonnegative and
\[
-\frac{1}{n-1}\le D_i\le 1.
\]

It remains to identify $W_i$ for the LBFP estimator. Let $X_i$ lie in the interpolation cell $I_k$. The LBFP fitted value is
\[
\widehat f_b(X_i)
=
\frac{1}{nV_b}
\sum_{j\in\{0,1\}^d}
c_j(X_i)\nu_{k+j}.
\]
Writing the counts as sums of indicators,
\[
\nu_{k+j}
=
\sum_{r=1}^n
\mathbf 1\{X_r\in B_{k+j}\},
\]
shows that the contribution of $X_i$ to its own fitted value is the interpolation weight attached to the unique neighboring counting cell that receives $X_i$. Denote this index by $j_i^\star\in\{0,1\}^d$. Coordinatewise,
\[
j_{is}^\star
=
\begin{cases}
0, & u_{is}<1/2,\\
1, & u_{is}\ge 1/2,
\end{cases}
\]
with ties being immaterial because the two coordinate weights are equal at $u_{is}=1/2$. Thus
\[
W_i=c_{j_i^\star}(X_i).
\]
Using the multilinear weight definition,
\[
c_{j_i^\star}(X_i)
=
\prod_{s=1}^d
u_{is}^{j_{is}^\star}
(1-u_{is})^{1-j_{is}^\star}
=
\prod_{s=1}^d
\max\{u_{is},1-u_{is}\}.
\]
Each coordinate factor lies in $[1/2,1]$, so $2^{-d}\le W_i\le1$. Substituting this expression for $W_i$ into the algebraic identity above gives the stated closed form.
\end{proof}

\section{Proof of Proposition~\ref{prop:contam_orders}}\label{app:contam_proof}

\begin{proof}[Proof of Proposition~\ref{prop:contam_orders}]
By Proposition~\ref{prop:lbfp-closed-form},
\[
D_i
=
\frac{W_i}{(n-1)V_b\widehat f_b(X_i)}
-\frac{1}{n-1}.
\]
For the LBFP estimator, \(2^{-d}\le W_i\le1\). Since \(d\) is fixed, \(W_i=O(1)\) deterministically.

We first consider inlier observations, that is, the conditional distribution given \(C_i=0\). By Assumption~\ref{ass:contamination-regions}, there exists \(\delta>0\) such that
\[
P\{f_n(X_i)\ge\delta\mid C_i=0\}\to1.
\]
By Assumption~\ref{ass:contamination-LBFP}, for every \(\eta>0\),
\[
P\left(
\left|
\frac{\widehat f_b(X_i)}{f_n(X_i)}-1
\right|>\eta
\;\middle|\;
C_i=0
\right)
\to0.
\]
Taking \(\eta=1/2\), the event
\[
\left\{
f_n(X_i)\ge\delta
\right\}
\cap
\left\{
\left|
\frac{\widehat f_b(X_i)}{f_n(X_i)}-1
\right|\le\frac12
\right\}
\]
has conditional probability tending to one given \(C_i=0\). On this event,
\(\widehat f_b(X_i)\ge\delta/2\). Hence
\[
\frac{1}{\widehat f_b(X_i)}=O_p(1)
\]
conditionally on \(C_i=0\). Therefore
\[
\frac{W_i}{(n-1)V_b\widehat f_b(X_i)}
=
O_p\{(nV_b)^{-1}\}.
\]
The second term in the closed-form expression is \(O(n^{-1})\), which is absorbed in
\(O_p\{(nV_b)^{-1}\}\) because \(V_b\to0\). Hence
\[
D_i
=
O_p\{(nV_b)^{-1}\}
\]
conditionally on \(C_i=0\).

We now consider contaminating observations, that is, the conditional distribution given \(C_i=1\). By Assumption~\ref{ass:contamination-LBFP},
\[
P\{\widehat f_b(X_i)>0\mid C_i=1\}\to1.
\]
Thus the following calculation is valid on an event whose conditional probability tends to one. Multiplying the closed-form expression by
\(nV_b\varepsilon_n g_n(X_i)\), we obtain
\[
D_i\,nV_b\varepsilon_n g_n(X_i)
=
\frac{n}{n-1}
W_i
\frac{\varepsilon_n g_n(X_i)}{\widehat f_b(X_i)}
-
\frac{n}{n-1}
V_b\varepsilon_n g_n(X_i).
\]
The first term on the right-hand side is \(O_p(1)\) conditionally on \(C_i=1\) by Assumption~\ref{ass:contamination-LBFP}, because \(W_i=O(1)\) and
\[
\frac{\varepsilon_n g_n(X_i)}{\widehat f_b(X_i)}=O_p(1)
\]
conditionally on \(C_i=1\). The second term is \(O_p(1)\) conditionally on \(C_i=1\) by Assumption~\ref{ass:contamination-regions}, because
\[
V_b\varepsilon_n g_n(X_i)=O_p(1)
\]
conditionally on \(C_i=1\). Therefore
\[
D_i\,nV_b\varepsilon_n g_n(X_i)
=
O_p(1)
\]
conditionally on \(C_i=1\). This proves the result.
\end{proof}

\section*{Supplementary Materials}\label{sec:supplementary}

The supplementary materials document the simulation designs, method configurations, lambda-regime calibration table, credit-card preprocessing details, computing environment, and reproducibility scripts. The accompanying replication package is organized into \texttt{scripts/}, \texttt{data/}, \texttt{figures/}, \texttt{tables/}, and \texttt{article/} directories; its \texttt{README.md} lists the quick validation and full reproduction commands, and \texttt{validate\_submission.R} checks the R package, required artifacts, and exclusion of the raw credit-card data file. This replication package will be made available in a public archival repository upon publication.

\section*{Data Availability Statement}
The Credit Card Fraud Detection data used in this article are publicly available through OpenML, dataset ID 42175 \citep{Vanschoren2013OpenML,creditcard2013}. The replication code can retrieve the data from OpenML or use a local, non-versioned copy named \texttt{data/creditcard\_42175.rds}.

\section*{Code Availability Statement}
The code supporting this article is included in the accompanying replication package for review. A public archival version will be released after acceptance or publication.

\bibliography{article2_references}

@article{Scott1985,
  author  = {Scott, D. W.},
  title   = {Frequency polygons, theory and applications},
  journal = {Journal of the American Statistical Association},
  year    = {1985},
  volume  = {80},
  pages   = {348--354}
}

@article{Hampel1974,
  author  = {Hampel, Frank R.},
  title   = {The Influence Curve and Its Role in Robust Estimation},
  journal = {Journal of the American Statistical Association},
  year    = {1974},
  volume  = {69},
  number  = {346},
  pages   = {383--393},
  doi     = {10.1080/01621459.1974.10482962},
  url     = {https://doi.org/10.1080/01621459.1974.10482962}
}

@inproceedings{liu2008isolationforest,
  title        = {Isolation Forest},
  author       = {Liu, Fei Tony and Ting, Kai Ming and Zhou, Zhi-Hua},
  booktitle    = {2008 Eighth IEEE International Conference on Data Mining},
  year         = {2008},
  pages        = {413--422},
  publisher    = {IEEE},
  doi          = {10.1109/ICDM.2008.17},
  url          = {https://doi.org/10.1109/ICDM.2008.17}
}

@inproceedings{SchubertZimekKriegel2014,
  author    = {Erich Schubert and Arthur Zimek and Hans{-}Peter Kriegel},
  title     = {Generalized Outlier Detection with Flexible Kernel Density Estimates (KDEOS)},
  booktitle = {Proceedings of the 2014 {SIAM} International Conference on Data Mining (SDM)},
  year      = {2014},
  pages     = {542--550},
  doi       = {10.1137/1.9781611973440.63},
  url       = {https://epubs.siam.org/doi/10.1137/1.9781611973440.63}
}

@article{CarbonDuchesne2025,
  author  = {Michel Carbon and Thierry Duchesne},
  title   = {Asymptotic normality of multivariate frequency polygons for stationary random fields},
  journal = {Annals of the Institute of Statistical Mathematics},
  year    = {2025},
  doi     = {10.1007/s10463-025-00952-x},
  url     = {https://link.springer.com/article/10.1007/s10463-025-00952-x}
}

@article{CarbonDuchesne2024LBFP,
  author  = {Michel Carbon and Thierry Duchesne},
  title   = {Multivariate frequency polygon for stationary random fields},
  journal = {Annals of the Institute of Statistical Mathematics},
  year    = {2024},
  volume  = {76},
  number  = {2},
  pages   = {263--287},
  doi     = {10.1007/s10463-023-00883-5},
  url     = {https://doi.org/10.1007/s10463-023-00883-5}
}

@manual{GLBFPpackage,
  title  = {GLBFP: General Linear Blend Frequency Polygon Density Estimation},
  author = {Aurélien Nicosia and Thierry Duchesne and Michel Carbon},
  year   = {2026},
  note   = {R package version 0.5.2},
  url    = {https://github.com/AurelienNicosiaULaval/GLBFP},
  doi    = {10.5281/zenodo.17945962}
}

@misc{creditcard2013,
  title        = {Credit Card Fraud Detection},
  author       = {{European cardholders dataset}},
  year         = {2013},
  note         = {Available on Kaggle at \url{https://www.kaggle.com/mlg-ulb/creditcardfraud},
                  mirrored on OpenML (IDs 1597 and 42175)},
  howpublished = {\url{https://www.openml.org/d/42175}}
}

@article{Vanschoren2013OpenML,
  author  = {Vanschoren, Joaquin and van Rijn, Jan N. and Bischl, Bernd and Torgo, Luis},
  title   = {{OpenML}: Networked Science in Machine Learning},
  journal = {{SIGKDD} Explorations},
  year    = {2013},
  volume  = {15},
  number  = {2},
  pages   = {49--60},
  doi     = {10.1145/2641190.2641198},
  url     = {https://doi.org/10.1145/2641190.2641198}
}

@article{Breunig2000,
  author    = {Markus M. Breunig and Hans{-}Peter Kriegel and Raymond T. Ng and Jörg Sander},
  title     = {LOF: Identifying Density-Based Local Outliers},
  journal   = {ACM SIGMOD Record},
  volume    = {29},
  number    = {2},
  pages     = {93--104},
  year      = {2000},
  doi       = {10.1145/335191.335388}
}

@article{KandanaarachchiHyndman2021,
  author = {Kandanaarachchi, Sevvandi and Hyndman, Rob J.},
  title = {Leave-One-Out Kernel Density Estimates for Outlier Detection},
  journal = {Journal of Computational and Graphical Statistics},
  volume = {31},
  number = {2},
  pages = {586--599},
  year = {2021},
  doi = {10.1080/10618600.2021.2000425}
}

@book{Silverman1986,
  author    = {Silverman, B. W.},
  title     = {Density Estimation for Statistics and Data Analysis},
  publisher = {Chapman and Hall},
  address   = {London},
  year      = {1986}
}

@article{BoltonHand2002,
  author  = {Bolton, Richard J. and Hand, David J.},
  title   = {Statistical Fraud Detection: A Review},
  journal = {Statistical Science},
  volume  = {17},
  number  = {3},
  pages   = {235--255},
  year    = {2002},
  doi     = {10.1214/ss/1042727940}
}

@article{Chandola2009,
  title={Anomaly detection: A survey},
  author={Chandola, Varun and Banerjee, Arindam and Kumar, Vipin},
  journal={ACM computing surveys (CSUR)},
  volume={41},
  number={3},
  pages={1--58},
  year={2009},
  publisher={ACM New York, NY, USA},
  doi={10.1145/1541880.1541882}
}

@article{dalpozzolo2015calibrating,
  author = {Dal Pozzolo, Andrea and Caelen, Olivier and Johnson, Reid A and Bontempi, Gianluca},
  title = {Calibrating Probability with Undersampling for Unbalanced Classification},
  journal = {2015 IEEE Symposium Series on Computational Intelligence},
  year = {2015},
  pages = {159--166},
  doi = {10.1109/SSCI.2015.33}
}

@inproceedings{Ramaswamy2000,
  author    = {Ramaswamy, Sridhar and Rastogi, Rajeev and Shim, Kyuseok},
  title     = {Efficient Algorithms for Mining Outliers from Large Data Sets},
  booktitle = {Proceedings of the 2000 ACM SIGMOD International Conference on Management of Data},
  year      = {2000},
  pages     = {427--438},
  publisher = {ACM},
  address   = {New York, NY, USA},
  doi       = {10.1145/342009.335437}
}

@article{Huber1964,
  author  = {Huber, Peter J.},
  title   = {Robust Estimation of a Location Parameter},
  journal = {The Annals of Mathematical Statistics},
  volume  = {35},
  number  = {1},
  pages   = {73--101},
  year    = {1964},
  doi     = {10.1214/aoms/1177703732}
}

@book{Scott1992,
  author    = {Scott, D. W.},
  title     = {Multivariate Density Estimation: Theory, Practice, and Visualization},
  publisher = {John Wiley \& Sons},
  address   = {New York},
  year      = {1992}
}

\end{document}

% --- supplement: supplementary-materials.tex ---

\title{Supplementary Materials for\\A Leave-One-Out Influence Statistic for Density-Based Outlier Detection}
\author{Aurélien Nicosia \and Thierry Duchesne \and Michel Carbon}
\date{}
\maketitle

\def\spacingset#1{\renewcommand{\baselinestretch}{#1}\small\normalsize}
\spacingset{1.3}

\renewcommand{\thesection}{S\arabic{section}}
\renewcommand{\thesubsection}{S\arabic{section}.\arabic{subsection}}
\renewcommand{\thetable}{S\arabic{table}}
\renewcommand{\thefigure}{S\arabic{figure}}
\renewcommand{\theequation}{S\arabic{equation}}
\renewcommand{\theHsection}{S\arabic{section}}
\renewcommand{\theHsubsection}{S\arabic{section}.\arabic{subsection}}
\renewcommand{\theHtable}{supplement.table.\arabic{table}}
\renewcommand{\theHfigure}{supplement.figure.\arabic{figure}}
\renewcommand{\theHequation}{S\arabic{equation}}
\setcounter{section}{0}
\setcounter{table}{0}
\setcounter{figure}{0}
\setcounter{equation}{0}

This supplement has three sections. Section~S1 gives the simulation details: data-generating mechanisms, local-contamination calibration for the Section~3.1 behavior study, method settings, and scripts. Section~S2 documents the credit-card analysis: preprocessing, full-data comparison, timing environment, and scripts. Section~S3 explains the sparse-grid computation used to evaluate the LOO-LBFP score exactly. This computation avoids constructing the full grid and avoids refitting a leave-one-out density estimator for each observation.

\section{Simulation study details}\label{app:simulation_settings}

This section describes only the simulation components used in Sections~3.1--3.3 of the manuscript. The simulated outlier labels are used only to audit the contamination regimes and to compute external performance summaries.

\subsection{Main simulation designs}

The simulation study has two blocks. The first block is used in Section~3.1 to study the behavior of \(D_i\). It varies the sample size and the target local contamination count. It uses \(n\in\{1000,10000,100000\}\), contamination fraction \(\varepsilon=0.05\), \(R=20\) final replications, and four target local contamination counts \((\lambda)\). The pilot calibration for this block searches over \(\tau\in[0.03,300]\) on a logarithmic grid. The two inlier models are a standard bivariate normal distribution and a four-state Markov-switching Gaussian mixture adapted from \citet{CarbonDuchesne2024LBFP}.

The second block is used in Sections~3.2--3.3 to compare methods. It uses four scenarios, fixes \(n=1000\), sets the contamination fraction to \(\varepsilon=0.05\), and uses \(R=100\) replications.

For the method-comparison scenarios, the simulation script sets a global seed before the replication loop. Stochastic methods use deterministic method-specific seeds of the form \(10000+100(r-1)+j\), where \(r\) is the replication number and \(j\) is the method index. No scenario-specific parameter tuning is performed.

\begin{table}[htbp]
\centering
\caption{Data-generating mechanisms for the four simulation scenarios reported in the manuscript.}
\label{tab:dgp_settings}
\small
\begin{tabularx}{\textwidth}{lXX}
\toprule
Scenario & Inlier distribution & Contaminating distribution \\
\midrule
Mixture & Three Gaussian clusters with centers \((-2,-2)\), \((2,2)\), and \((-2,2)\); proportions \(0.4\), \(0.3\), and \(0.3\); covariance \(0.3I_2\). & Uniform contamination on \([-6,6]^2\). \\
\addlinespace
Ring & \(\theta\sim U(0,2\pi)\), \(r\sim N(4,0.4^2)\), with observations \((r\cos\theta,r\sin\theta)\). & Gaussian contamination \(N_2(0,0.2I_2)\). \\
\addlinespace
Heavy-Tail & A 70/30 mixture of \(N_2(0,I_2)\) and independent Student-\(t_3\) coordinates. & Radial extremes with angle \(U(0,2\pi)\) and radius \(U(6,9)\). \\
\addlinespace
Markov-Switching & Four-state Markov-switching Gaussian mixture adapted from the dependent inlier design of \citet{CarbonDuchesne2024LBFP}. & Gaussian contamination \(N_2\{(6,6)^\top,10^2I_2\}\). \\
\bottomrule
\end{tabularx}
\end{table}

\subsection{\texorpdfstring{Section~3.1 behavior study: local-contamination calibration and results}{Section 3.1 behavior study: local-contamination calibration and results}}

For the Section~3.1 behavior study, outliers are generated as
\[
Z\sim N_2\{(6,6)^\top,\tau^2I_2\},
\]
where \(\tau\) is chosen by pilot calibration. For each candidate \(\tau\), a pilot contaminated sample is generated and the LBFP bin width and grid origin are computed from that sample. Let \(C_i=1\) denote a simulated contaminating observation, and let \(A_b(x)\) be the union of the \(2^d\) counting cells whose bin counts can enter the LBFP interpolation at \(x\). The empirical local contamination count for an outlier \(X_i\) is
\[
\widehat\lambda_i
=
\sum_{r:C_r=1}
\mathbf 1\{X_r\in A_b(X_i)\}
-1,
\]
where the subtraction removes the self-count. The pilot value associated with \(\tau\) is the average
\[
\overline{\lambda}_{\mathrm{pilot}}(\tau)
=
\frac{1}{\sum_i C_i}
\sum_{i:C_i=1}\widehat\lambda_i,
\]
averaged over the two pilot replications. For a target value \(\lambda_0\), the selected value of \(\tau\) minimizes
\[
\left|\log\{1+\overline{\lambda}_{\mathrm{pilot}}(\tau)\}
-\log(1+\lambda_0)\right|
\]
over the logarithmic candidate grid. The empirical \(\widehat\lambda\) reported in Table~\ref{tab:lambda_regimes_supp} is recomputed on the final simulation replications using the selected \(\tau\).

\begin{table}[H]
    \centering
    \caption{Detailed results for the Section~3.1 behavior study. Emp. \(\widehat\lambda\) is the realized average number of other contaminating observations in the LBFP contributing neighborhood of a \(C_i=1\) observation. The two \(D_i\) columns give the average raw scores within inliers and contaminating observations. ROC-AUC ranks \(C_i=1\) observations above \(C_i=0\) observations. Entries report means across replications, with standard deviations in parentheses where applicable.}
    \label{tab:lambda_regimes_supp}
    \input{tables/table_31_lambda_regimes.tex}
\end{table}

\subsection{Simulation method configurations}

The method configurations used in the simulation scripts are:
\begin{itemize}
    \item LOO-LBFP: the LBFP density estimates were computed with the \texttt{GLBFP} package \citep{GLBFPpackage}. We used the single-grid LBFP estimator. The bin width \(b\) was selected as in \citet{CarbonDuchesne2024LBFP}, using the bandwidth selection implemented in \texttt{GLBFP}.
    \item LOF: \texttt{dbscan::lof} with \(k=20\), following \citet{Breunig2000}.
    \item \(k\)NN: \texttt{dbscan::kNN} with \(k=20\), following \citet{Ramaswamy2000}.
    \item KDEOS: \texttt{DDoutlier::KDEOS} logic with \(k_{\min}=10\) and \(k_{\max}=20\), following \citet{SchubertZimekKriegel2014}.
    \item LOO-KDE: \texttt{ks::Hpi} and \texttt{ks::kde}, with normal-reference fallback when the plug-in selector fails \citep{Silverman1986,Scott1992}.
    \item iForest: the \texttt{isotree} implementation, using 100 trees and subsample size \(\min(256,n)\), following \citet{liu2008isolationforest}.
\end{itemize}

\subsection{Simulation reproducibility scripts}

In the replication package, the simulation artifacts can be regenerated with the following scripts:
\begin{enumerate}
    \item \texttt{scripts/sim\_31\_lambda\_regimes.R} regenerates the Section~3.1 behavior-study table and figure, including the \(\lambda\)-calibration audit.
    \item \texttt{scripts/run\_simulations.R} regenerates the Section~3.2 ROC-AUC, score-summary, and runtime tables.
    \item {\footnotesize\texttt{scripts/generate\_scenario\_overview\_ggplot.R}} regenerates the simulation scenario overview.
    \item \texttt{scripts/regenerate\_property\_figures.R} regenerates the Section~3.1 diagnostic figure.
    \item \texttt{scripts/generate\_computation\_figure.R} regenerates the runtime-by-sample-size figure.
\end{enumerate}

\section{Credit-card data-analysis details}\label{app:creditcard_reproducibility}

This section documents only the real-data credit-card analysis. Fraud labels are used only after scoring, for external evaluation.

\subsection{Credit-card analysis settings}

The analysis uses the real Credit Card Fraud Detection data \citep{dalpozzolo2015calibrating}. The data are read either from the local copy in \texttt{data/creditcard\_42175.rds} or from OpenML dataset 42175 \citep{Vanschoren2013OpenML,creditcard2013}. No synthetic fallback is used. The full-data comparison uses standardized versions of the 28 anonymized PCA variables V1--V28 and Amount. Time is not used. Labels are used only after scoring for external evaluation.

Direct LOO-KDE is not included in the full-data table in the manuscript because its direct implementation had not finished after 30 minutes.

The corresponding machine-readable files in the replication package are:
\begin{itemize}
    \item \texttt{results/creditcard\_full\_method\_comparison.csv};
    \item \texttt{results/creditcard\_full\_method\_comparison.rds};
    \item \texttt{results/creditcard\_full\_method\_scores.rds}.
\end{itemize}

\subsection{Computing environment for data analysis}

The timing notes record a MacBook Pro with Apple M4 Pro processor, 14 CPU cores, and 24 GB RAM, running macOS 26.5. The R environment used for the final rebuild was R version 4.5.0 (2025-04-11) on platform \texttt{aarch64-apple-darwin20}. Main package versions included \texttt{dplyr} 1.2.1, \texttt{tidyr} 1.3.2, \texttt{ggplot2} 4.0.2, \texttt{dbscan} 1.2.2, \texttt{ks} 1.15.1, \texttt{isotree} 0.6.1.4, and \texttt{Rcpp} 1.1.1. The full-sample iForest run requested all available threads, but the local \texttt{isotree} build reported that it was compiled without OpenMP support; its recorded time should therefore be read as the time for that local build.

\subsection{Credit-card reproducibility scripts}

In the replication package, the full-data credit-card comparison and formatted manuscript table can be regenerated with:
\begin{enumerate}
    \item {\footnotesize\texttt{scripts/creditcard\_full\_method\_comparison.R}}\par
    Runs the full-sample V1--V28 plus Amount credit-card comparison, excluding direct LOO-KDE.
    \item \texttt{scripts/format\_creditcard\_presentation.R} formats the main credit-card table from the generated CSV file.
\end{enumerate}

The manuscript and supplementary material can then be compiled from the \texttt{article/} directory with \texttt{latexmk}.

\section{Sparse-grid computational strategy}
\label{app:sparse_strategy}

This section describes the computational strategy used to evaluate the LOO-LBFP score. The goal is to compute the same statistic as in the manuscript, but without constructing the full grid and without recomputing a leave-one-out density estimator for every observation.

Let \(G\subset \mathbb{Z}^d\) denote the set of occupied LBFP counting cells. Each observation \(X_i\) is first assigned to its grid cell, and the cell counts are stored in a sparse map:
\[
N_k=\sum_{i=1}^n \mathbf{1}\{X_i \in C_k\},
\qquad k\in G.
\]
Cells not appearing in \(G\) are not stored and are interpreted as having count zero. This representation avoids allocating the full Cartesian grid, which may be very large in moderate dimension.

For a point \(x\), the LBFP estimator is a multilinear interpolation of nearby cell counts. A direct implementation could enumerate all interpolation vertices and query their counts. The sparse implementation instead searches only through prefixes that are compatible with occupied cells. The occupied cell indices are organized as a sparse prefix tree. At each dimension, the algorithm keeps only the candidate cell coordinates that can contribute to the interpolation at \(x\). If no occupied cell matches a current prefix, the branch is discarded. Empty regions of the grid are therefore skipped before the algorithm reaches the final interpolation vertices.

The procedure for computing all scores is as follows.

\begin{enumerate}
    \item Build the sparse counting map by assigning each observation to its LBFP counting cell.
    \item Store the occupied cell indices in a prefix structure, optionally ordering dimensions by the number of occupied coordinates to improve pruning.
    \item For each observation \(X_i\), evaluate \(\widehat f_b(X_i)\) by traversing only occupied prefixes compatible with the interpolation stencil of \(X_i\).
    \item Compute the self-weight \(W_i\), which is the contribution of \(X_i\) to \(\widehat f_b(X_i)\) under the fixed grid and bandwidth.
    \item Apply the closed-form leave-one-out identity
    \[
    D_i
    =
    \frac{1}{n-1}
    \left\{
    \frac{W_i}{V_b\widehat f_b(X_i)}
    -
    1
    \right\}.
    \]
\end{enumerate}

This algorithm is exact for the fixed-grid LBFP estimator used in the manuscript. It changes only the computational evaluation of the estimator. The statistical definition of \(D_i\), the bandwidth, the grid, and the leave-one-out formula are unchanged. The gain comes from two sources: Proposition 1 removes the need for \(n\) separate leave-one-out fits, and the sparse-grid representation avoids storing and scanning empty grid cells.

\FloatBarrier
\bibliography{article2_references}